\newcommand{\red}[1]{\textcolor{red}{#1}}
\newtheorem{definition}{Definition}
\newtheorem{theorem}{Theorem}
\newtheorem{proof}{Proof}
\newtheorem{corollary}{Corollary}
\title{Runtime Verification Through Forward Chaining}
\author{Alan Perotti
\institute{University of Turin}
\email{perotti@di.unito.it}
\and
Guido Boella
\institute{University of Turin}
\email{boella@di.unito.it}
\and
Artur d'Avila Garcez
\institute{City University London}
\email{a.garcez@city.ac.uk}
}
\begin{document}
\maketitle

\begin{abstract}
In this paper we present a novel rule-based approach for Runtime Verification of FLTL properties over finite but expanding traces. Our system exploits Horn clauses in implication form and relies on a forward chaining-based monitoring algorithm. This approach avoids the branching structure and exponential complexity typical of tableaux-based formulations, creating monitors with a single state and a fixed number of rules. This allows for a fast and scalable tool for Runtime Verification: we present the technical details together with a working implementation.\\
\end{abstract}

\section{Introduction}
We are designing a framework for combining runtime verification and learning in connectionist models to improve the verification of compliance of systems based on business processes. By adapting formal specifications of such systems to include tolerable soft-violations occurring in real-practice to optimise the systems, we want to obtain a more realistic representation of compliance. Adaptation is the recent trend in Process Mining~\cite{AAlst}: the goal is “to discover, monitor and improve real processes (i.e., not assumed processes) by extracting knowledge from event logs readily available in today’s (information) systems”. Within this wider framework, this paper focuses on the introduction of a novel monitoring system, RuleRunner, built as a set of Horn clauses in implication form and exploiting forward chaining to perform runtime verification tasks. A RuleRunner system can be encoded in a recurrent neural network exploiting results from the Neural-Symbolic Integration~\cite{cilp} area, but this is outside the scope of this paper. \\
This paper is structured as follows: Section 2 introduces background and related work, while Section 3 provides a technical introduction of our rule system. Section 4 provides experimental results and Section 5 ends the paper with final considerations and directions for future work.\\

\section{Background and Related Work}
\subsection{Horn Clauses and Chaining}
A Horn clause~\cite{horn} is a clause which contains at most one positive literal. The general format of such a clause is thus as follows:
$$\neg\alpha_1 \vee .. \vee \neg\alpha_n \vee \beta$$
This may be rewritten as an implication:
$$(\alpha_1 \wedge .. \wedge \alpha_n) \rightarrow \beta$$
where $\beta$ is called {\em head} and $(\alpha_1 \wedge .. \wedge \alpha_n)$ is called {\em body}. The two formulations are equivalent, and usually the former is called {\em disjunctive form} and the latter {\em implication form}. Horn clauses are used for knowledge representation and automatic reasoning; in particular, inference with Horn clauses can be done through backward or forward chaining. Backward chaining algorithms are goal-driven approaches that work their way from a given goal or query; it is implemented in logic programming (e.g. in Prolog) by SLD resolution~\cite{kow}. Forward chaining is a data-driven approach that starts with the available data and uses inference rules to extract more data until a goal is reached; it is a popular implementation strategy for production rule systems~\cite{hornprod}. \\
\subsection{Runtime Verification}
Runtime Verification (RV) relates an observed system with a formal property $\phi$ specifying some desired behaviour. An RV module, or monitor, is defined as a device that reads a trace and yields a certain verdict~\cite{rv}. A trace is a sequence of cells, which in turn are lists of observations occurring in a given discrete span of time. Runtime verification may work on finite (terminated), finite but continuously expanding, or on prefixes of infinite traces. While LTL is a standard semantic for infinite traces~\cite{ltl}, there are many semantics for finite traces: FLTL~\cite{fltl}, RVLTL~\cite{rvltl}, LTL3~\cite{ltl3}, LTL$\pm$~\cite{ltl+} just to name some. Since LTL semantics is based on infinite behaviours, the issue is to close the gap between properties specifying infinite behaviours and finite traces. In particular, FLTL differs from LTL as it offers two {\em next} operators ($X, \bar{X}$ in~\cite{rvltl}, $X,W$ in this paper), called respectively {\em strong} and {\em weak} next. Intuitively, the strong (and standard) $X$ operator is used to express with $X\phi$ that a next state must exist and that this next state has to satisfy property $\phi$. In contrast, the weak $W$ operator in $W\phi$ says that if there is a next state, then this next state has to satisfy the property $\phi$. More formally, let $u = a_0..a_{n-1}$ denote a finite trace of length $n$. The truth value of an FLTL formula $\psi$ (either $X\phi$ or $W\phi$) w.r.t. $u$ at position $i < n$, denoted by $[u,i\vDash \psi]$, is an element of $\mathbb{B}$ and is defined as follows:\\

\noindent
{\footnotesize
\begin{minipage}{0,45\textwidth}
\[
[u,i\vDash X\phi] = 
\begin{cases}
[u,i+1\vDash \phi],& \text{if i+1} < n\\
\perp,              & \text{otherwise}
\end{cases}
\]
\end{minipage}
\begin{minipage}{0,05\textwidth}
	\hspace*{-2mm}
\end{minipage}
\begin{minipage}{0,45\textwidth}
\[
[u,i\vDash W\phi] = 
\begin{cases}
[u,i+1\vDash \phi],& \text{if i+1} < n\\
\top,              & \text{otherwise}
\end{cases}
\]
\end{minipage}
}

\vspace*{5mm}
While RVLTL and LTL3 have been proven to hold interesting properties w.r.t. FLTL (see~\cite{rvltl}), we selected FLTL as we think it captures a more intuitive semantics when dealing with finite traces. Suppose to monitor $\phi = \Box a$ over a trace $t$, where $a$ is observed in all cells: we have that $[t\vDash \phi]$ equals, respectively, $\top$ in FLTL, $?$ in LTL3, and $T^p$ in RVLTL. If $t$ is seen as a prefix of a longer trace $t\sigma$, then LTL3 and RVLTL provide valuable information about how $\phi$ could be evaluated over $\sigma$. But if $t$ is a conclusive, self-contained trace (e.g. a daily set of transactions), then the FLTL semantics captures the intuitive positive answer to the query {\em does $a$ always hold in this trace?}\\
Several RV systems have been developed, and they can be clustered in three main approaches, based respectively on rewriting, automata and rules~\cite{rv}. Within rule based approaches, RuleR~\cite{barr} uses an original approach. It copes with the temporal dimension by introducing rules which may reactivate themselves in later stages of the reasoning, and RuleRunner is inspired by this powerful idea. However, RuleR rules may contain disjunctions in the head and therefore do not correspond to Horn clauses. Furthermore, RuleR creates alternative {\em observations expectations}, and therefore the application of forward-chaining inference mechanisms on a RuleR system creates a branching, Kripke-like {\em possible world structure}~\cite{kripke}. We focus on FLTL and encode each formula in a system of rules that correspond to Horn clauses and therefore allow to apply forward-chaining inference algorithms. The next section will describe the difference in the two approaches in more detail.

\section{The RuleRunner Rule System}
RuleRunner is a rule-based online monitor observing finite but expanding traces and returning an FLTL verdict. RuleRunner accepts formulae $\phi$ generated by the grammar:

\vspace*{-2mm}
$$\phi ::= true \mid \ a \mid \ !a \mid \phi \vee \phi \mid \phi \wedge \phi \mid \phi U \phi \mid X\phi \mid W\phi \mid \Diamond \phi \mid \Box \phi \mid END$$
\vspace*{-5mm}

\noindent
$a$ is treated as an atom and corresponds to a single observation in the trace. We assume, without loss of generality, that temporal formulae are in negation normal form (NNF), e.g., negation operators pushed inwards to propositional literals and cancellations applied. $W$ is the weak next operator. {\em END} is a special character that is added to the last cell of a trace to mark the end of the input stream.

\begin{algorithm}
	\caption{RuleRunner monitoring (abstract)}
	\label{alg:abs}
	{\scriptsize
\begin{algorithmic}[1]
	\Function{RR-monitoring}{$\phi$,trace $t$} 
	\State Build a monitor $RR_{\phi}$ encoding $\phi$
	\While{new cells exist in $t$}
	\State Observe the current cell
	\State Compute truth values of $\phi$ in the current cell of $t$
		\Comment{Evaluation rules}
	\If {$\phi$ is verified or falsified}
    \State \Return SUCCESS or FAILURE respectively
	\EndIf
	\State Set up the monitor for the next cell in $t$
		\Comment{Reactivation rules}
	\EndWhile
	\EndFunction
\end{algorithmic}
}
\end{algorithm}

Given an FLTL formula $\phi$ and a trace $t$, Algorithm \ref{alg:abs} provides an abstract description of the creation and runtime behaviour of a RuleRunner system monitoring $\phi$ over $t$. At first, a monitor encoding $\phi$ is computed. Second, the monitor enters the verification loop, composed by observing a new cell of the trace and computing the truth value of the property in the given cell. If the property is irrevocably satisfied or falsified in the current cell, RuleRunner outputs a binary verdict. If this is not the case (because the $\phi$ refers to cells ahead in the trace), the system shifts to the following cell and enters another monitoring iteration. The FLTL semantics guarantees that, if the trace ends, the verdict in the last cell of the trace is binary. RuleRunner is a runtime monitor, as it analyses one cell at a time and never needs to store past cells in memory nor peek into future ones.

\subsection{Building the rule system}

\begin{definition}
A RuleRunner system is a tuple $\langle R_E, R_R, S\rangle$, where $R_E$ ({\em evaluation rules}) and $R_R$ ({\em reactivation rules}) are rule sets, and $S$ (for {\em state}) is a set of active rules, observations and truth evaluations.
\end{definition}
Throughout this paper we mostly use the terms {\em State} and {\em rule}, as they are used in the Runtime Verification area. However, our rules correspond to Horn clauses in implication form, and what we call {\em State} corresponds to a {\em knowledge base}.\\

Given a finite set of observations $O$ and an FLTL formula $\phi$ over (a subset of) $O$, a state $S$ is a set of observations ($o \in O$), rule names ($R[\psi]$) and truth evaluations ($[\psi]V$); $V \in \{T,F,?\}$ is a truth value. A rule name $R[\psi]$ in $S$ means that the logical formula $\psi$ is under scrutiny, while a truth evaluation $[\psi]V$ means that the logical formula $\psi$ currently has the truth value $V$. The third truth value, $?$ ({\em undecided}), means that it is impossible to give a binary verdict in the current cell.\\


Evaluation rules follow the pattern $R[\phi], [\psi^1]V, \dots, [\psi^n]V, \rightarrow [\phi]V$ and their role is to compute the truth value of a formula $\phi$ under verification, given the truth values of its direct subformulae $\psi^i$ (line 5 in Algorithm \ref{alg:abs}). For instance, $R[\Diamond \psi], [\psi]T \rightarrow [\Diamond \psi]T$ reads as {\em if $\Diamond \psi$ is being monitored and $\psi$ holds, then $\Diamond \psi$ is true}.\\

Reactivation rules follow the pattern $[\phi]? \rightarrow R[\phi], R[\psi^1], \dots, R[\psi^n]$ and the meaning is that if one formula is evaluated to undecided, that formula (together with its subformulae) is scheduled to be monitored again in the next cell of the trace (line 9 in Algorithm \ref{alg:abs}). For instance, $[\Diamond \psi]? \rightarrow R[\Diamond \psi], R[\psi]$ means that {\em if $\Diamond \psi$ was not irrevocably verified nor falsified in the current cell of the trace, both $\psi$ and $\Diamond \psi$ will be monitored again in the next cell}.\\

Evaluation rules are Horn clauses in implication form. Reactivation rules usually have several positive conjuncts in the head, and therefore a reactivation rule $A \rightarrow \beta_1,..\beta_n$ (where $A = \alpha_1,..,\alpha_m$) can be rewritten as $n$ separate Horn clauses $A \rightarrow \beta_1, .. , A \rightarrow \beta_n$. Having different rules with the same head is something to handle with care in case of backward chaining, as many inferential engines implement a depth-first search and therefore the order of these rules impacts on the result. This is not the case when applying forward chaining, as for all rules, if all the premises of the implication are known, then its
conclusion is added to the set of known facts.\\

A RuleRunner feature is that rules never involve disjunctions. In RuleR, for instance, the simple formula $\Diamond a$ is mapped to the rule $R_{\Diamond a} : \ \longrightarrow a \mid R_{\Diamond a}$ and its meaning, intuitively, is that, if $\Diamond a$ has to be verified, either $a$ is observed (thus satisfying the property) or the whole formula will be checked again (in the next cell of the trace). The same formula corresponds to the following set of rules in RuleRunner:\\

\begin{minipage}[t]{0.5\textwidth}
$$R[\Diamond a], [a]T \rightarrow [\Diamond a]{T}$$
$$R[\Diamond a], [a]? \rightarrow [\Diamond a]{?}$$
$$R[\Diamond a], [a]F \rightarrow [\Diamond a]{?}$$
\end{minipage}
\begin{minipage}[t]{0.5\textwidth}
$$R[\Diamond a], [a]?, END \rightarrow [\Diamond a]{F}$$
$$[\Diamond a]? \rightarrow R[a], R[\Diamond a]$$
\end{minipage}
\vspace*{5mm}

The disjunction in the head of the RuleR rule corresponds to the additional constraints in the body of the RuleRunner rules. Therefore, where RuleR generates a set of alternative hypotheses and later matches them with actual observations, RuleRunner maintains a detailed state of exact information. This is achieved by means of evaluation tables: three-valued truth tables (as introduced by Lukasiewitz~\cite{three}) annotated with {\em qualifiers}. Each evaluation rule for $\phi$ corresponds to a single cell of the evaluation table for the main operator of $\phi$; a {\em qualifier} is a subscript letter providing additional information to $?$ truth values. Table \ref{fig:or} gives the example for disjunction. Qualifiers ($B,L,R$ in this case) are used to store and propagate detailed information about the verification status of formulae. \\

\begin{table}
\begin{center}
	\begin{minipage}[h]{.18\linewidth}
	\centering
	\begin{tabular}{ c|c|c|c| }
	\multicolumn{1}{c}{$\bigvee$}
	 &  \multicolumn{1}{c}{$T$}
	 &  \multicolumn{1}{c}{$?$}
	 & \multicolumn{1}{c}{$F$} \\
	\cline{2-4}
	$T$ & $T$ & $T$  & $T$ \\
	\cline{2-4}
	$?$ & $T$ & $?$  & $?$ \\
	\cline{2-4}
	$F$ & $T$ & $?$  & $F$ \\
	\cline{2-4}
	\end{tabular}
	\end{minipage}
	\begin{minipage}[h]{.07\linewidth}
	\centering
	\begin{tabular}{ c|}
	{}\\
	{}\\
	{}\\
	{}
	\end{tabular}
	\end{minipage}
	\begin{minipage}[t]{.2\linewidth}
	\centering
	\begin{tabular}{ c|c|c|c| }
	\multicolumn{1}{c}{$\bigvee_B$}
	 &  \multicolumn{1}{c}{$T$}
	 &  \multicolumn{1}{c}{$?$}
	 & \multicolumn{1}{c}{$F$} \\
	\cline{2-4}
	$T$ & $T$ & $T$  & $T$ \\
	\cline{2-4}
	$?$ & $T$ & $?_B$  & \red{$?_L$} \\
	\cline{2-4}
	$F$ & $T$ & $?_R$  & $F$ \\
	\cline{2-4}
	\end{tabular}
	\end{minipage}%
	\begin{minipage}[t]{.11\linewidth}
	\begin{tabular}{ c|c| }
	\multicolumn{1}{c}{}
	 & \multicolumn{1}{c}{$\bigvee_L$} \\
	\cline{2-2}
	$T$ & $T$  \\
	\cline{2-2}
	$?$ & $?_L$\\
	\cline{2-2}
	$F$ & $F$ \\
	\cline{2-2}
	\end{tabular}
	\end{minipage}
	\begin{minipage}[t]{.11\linewidth}
	\begin{tabular}{ c|c| }
	\multicolumn{1}{c}{}
	 & \multicolumn{1}{c}{$\bigvee_R$} \\
	\cline{2-2}
	$T$ & $T$  \\
	\cline{2-2}
	$?$ & $?_R$\\
	\cline{2-2}
	$F$ & $F$ \\
	\cline{2-2}
	\end{tabular}
	\end{minipage}
\vspace*{2mm}
\caption{{\footnotesize truth table (left) and evaluation tables (right) for $\vee$}}
\label{fig:or}
\end{center}
\end{table}

For instance, if $\phi$ is undecided and $\psi$ is false when monitoring $\phi \vee \psi$ (highlighted cell in Table~\ref{fig:or}), $?_L$ means that the disjunction is undecided, but that its future verification state will depend on the truth value of the \red{L}eft disjunct. Note, in fact, how $\vee_L$ is a unary operator. An example for this is monitoring $\Diamond b \vee a$ against a cell including only $c$: $a$ is false, $\Diamond b$ is undecided (as $b$ may be observed in the future), and the whole disjunction will be verified/falsified in the following cells depending on $\Diamond b$ only.\\

\begin{figure}[ht!]
	\centering
	\includegraphics[scale=0.25]{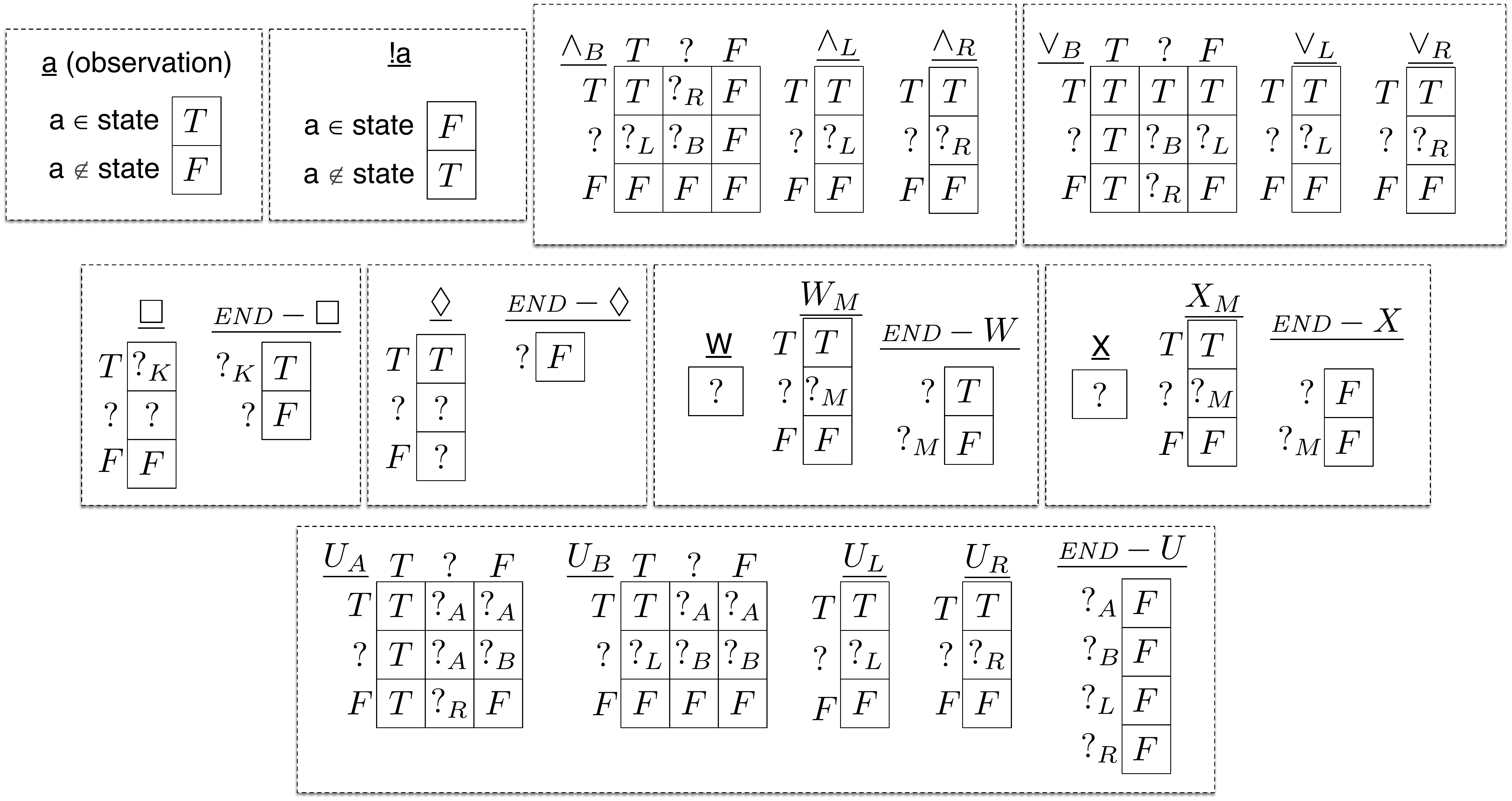}
	\vspace*{-3mm}
	\caption{Evaluation tables}
	\label{fig:tblz}
\end{figure}

\begin{algorithm}[ht!]
	\caption{Generation of rules}
	\label{alg:init}
	{\scriptsize
\begin{algorithmic}[1]
	\Function{Initialise}{$\phi$}
	\State $op \gets$ main operator
	\item[]
	\Comment{Apply recursively to subformula(e)}
	\If {$op \in \{\Box,\Diamond,X,W\}$}
	\State $\langle R_E^1, R_R^1, S^1 \rangle \gets$ Initialise($\psi^1$)
	\State $R_E \gets R_E^1$; 
	\State $R_R \gets R_R^1$; 
	\ElsIf {$op \in \{\vee,\wedge,U\}$}
	\State $\langle R_E^1, R_R^1, S^1 \rangle \gets$ Initialise($\psi^1$)
	\State $\langle R_E^2, R_R^2, S^2 \rangle \gets$ Initialise($\psi^2$)
	\State $R_E \gets R_E^1 \cup R_E^2$; $R_R \gets R_R^1 \cup R_R^2$; 
	\Else 
	\State $R_E \gets \emptyset$; $R_R \gets \emptyset$;
	\EndIf
	\item[]
	\Comment{Compute and add evaluation rules for main operator}
	\State $Cells \gets$op's-evaluation-tables
	\ForAll{cell $\in$ Cells}
	\State Convert cell to single rule $r_e$, substituting formula names
	\State $R_E \gets R_E \cup r_e$
	\EndFor
	\If {$\phi$-is-main-formula}
	\State $R_E \gets R_E \cup ([\phi]T \rightarrow SUCCESS)$
	\State $R_E \gets R_E \cup ([\phi]F \rightarrow FAILURE)$
	\State $R_E \gets R_E \cup ([\phi]? \rightarrow REPEAT)$
	\EndIf
	\item[]
	\Comment{Compute initial state for this subsystem}
	\If {$op = a$} $S \gets R[a]$	
	\ElsIf {$op = !a$} $S \gets R[!a]$
	\ElsIf {$op \in \{\vee,\wedge\}$} $S \gets S^1 \cup S^2 \cup R[\phi]B$
	\ElsIf {$op = U$} $S \gets S^1 \cup S^2 \cup R[\phi]A$
	\ElsIf {$op \in \{\Box,\Diamond\}$} $S \gets S^1 \cup R[\phi]$
	\ElsIf {$op \in \{X,W\}$} $S \gets R[\phi]$
	\EndIf
	\item[]
	\Comment{Compute and add reactivation rules for main operator}
	\If {$op \in \{\vee,\wedge\}$} $R_R \gets R_R \cup ([\phi]?Z \rightarrow R[\phi]?Z)$, for $Z \in {L,R,B}$
	\ElsIf {$op = U$} $R_R \gets R_R \cup ([\phi]?Z \rightarrow R[\phi]?Z,S^1,S^2)$, for $Z \in {A,B,L,R}$
	\ElsIf {$op \in \{\Box,\Diamond\}$}  $R_R \gets R_R \cup ([\phi]? \rightarrow R[\phi],S^1)$
	\ElsIf {$op \in \{X,W\}$} $R_R \gets R_R \cup ([\phi]? \rightarrow R[\phi]M,S^1) \cup ([\phi]?M \rightarrow R[\phi]M)$
	\EndIf
	\item[]
	\Comment{Return computed system}
	\State \Return $\langle R_E, R_R, S \rangle$
 	\EndFunction
\end{algorithmic}
}
\end{algorithm}

The complete set of evaluation tables is reported in Fig.~\ref{fig:tblz}, while the generation of evaluation and reactivation rules is summarised in Algorithm~\ref{alg:init}. The algorithm parses $\phi$ in a tree and visits the parsing tree in post-order. The system is built incrementally, starting from the system(s) returned by the recursive call(s). If $\phi$ is an observation (or its negation), an initial system is created, including two evaluation rules (as the observation may or may not occur), no reactivation rules and the single $R[\phi]$ as initial state. If $\phi$ is a conjunction or disjunction, the two systems of the subformulae are merged, and the conjunction/disjunction evaluation rules, reactivation rule and initial activation are added. The computations are the same if the main operator is $U$, but the reactivation rule will have to reactivate the monitoring of the two subformulae; in particular, $U_A$ denotes the standard {\em until} operator, while $U_B$ is the particular case where the $\psi$ failed and the {\em until} operator cannot be trivially satisfied anymore. Formulae with $X$ or $W$ as main operator go through two phases: first, the formula is evaluated to undecided, as the truth value can't be computed until the next cell is accessed. Special evaluation rules force the truth value to false (for $X$) or true (for $W$) if no next cell exists. Then, at the next iteration, the reactivation rule triggers the subformula: this means that if $X\phi$ is monitored in cell $i$, $\phi$ is monitored in cell $i+1$. $\phi$ is then monitored independently, and the $X\phi$ (or $W\phi$) rule enters a 'monitoring state' (suffix M in the table), simply mirroring $\phi$ truth value and self-reactivating. The evaluation of $\Box \phi$ is false (undecided) when $\phi$ is false (undecided); it is also undecided when $\phi$ holds (as $\Box \phi$ can never be true before the end of the trace), but the K suffix indicates when, at the end of the trace, an undecided $\Box$ can be evaluated to true. Finally, $\Diamond \phi$ constantly reactivates itself and its subformula $\phi$, unless $\phi$ is verified at runtime (causing $\Diamond \phi$ to hold), the trace ends ($\Diamond \phi$ fails).\\

RuleRunner generates several rules for each operator, but this number is constant, as it corresponds to the size of evaluation tables plus special rules (like the SUCCESS one). The number of rules corresponding to $\phi \vee \psi$, for instance, does not depend in any way on the nature of $\phi$ or $\psi$, as only the final truth evaluation of the two subformulae is taken into account. The preprocessing phase creates the parse tree of the property to encode and adds a constant number of rules for each node (subformula), and therefore the size of the rule set is linear w.r.t. the structure of the encoded formula $\phi$. The obtained rule set does not change at runtime nor when monitoring new traces. 

\subsection{Verification through Forward Chaining}

A RuleRunner rule system $RR_{\phi}$ encodes a FLTL formula $\phi$ in a set of rules. $RR_{\phi}$ can be used to check whether a given trace $t$ verifies or falsifies $\phi$. Given a set of Horn clauses in rule form $R$ and a set of atoms $A$, let the $FC(\cdot)$ (Forward Chaining) function be:
$$FC(R,A) = \{\beta \mid (A_i \rightarrow \beta) \in R \ \wedge \ A_i \subseteq A\}$$
Algorithm~\ref{alg:monitoringRR} describes how RuleRunner exploits forward chaining to perform a runtime verification task.\\

\begin{algorithm}
	{\scriptsize
	\caption{Runtime Verification using $RR_{\phi}$}
	\label{alg:monitoringRR}
\begin{algorithmic}[1]
	\Function{NN-monitor}{$\phi$,trace t} 
	\State \mbox{Create $RR_{\phi}=\langle R_R,R_E,S\rangle$ encoding $\phi$ (Algorithm~\ref{alg:init})}
	\While{new observations exist in t}
	\State $S' \gets S \cap obs$
	\While{$S \not= S'$}
	\State $S = S'$
	\State $S' \gets S \cap FC(S,R_E)$
	\EndWhile
	\If {$S$ contains SUCCESS (resp.FAILURE)}
    \State \Return return SUCCESS (resp.FAILURE)
	\EndIf
	\State $S \gets FC(S,R_R)$
	\EndWhile
	\EndFunction

\end{algorithmic}
}
\end{algorithm}

At the beginning, the rule system $RR_{\phi}$ is created. The monitoring loop iterates until $SUCCESS$ or $FAILURE$ is computed, and the FLTL semantics guarantees this happen in the last cell, if reached. At the beginning of each iteration (corresponding to the monitoring of a cell), the initial state $S$ contains a set of rule names corresponding to the subformulae to be checked in that cell. The observations of that cell are then added to the state of the system, and the state is incrementally expanded by means of forward chaining using the evaluation rules (line 7). This corresponds to computing the truth values of all subformulae of $\phi$ in a bottom-up way, from simple atoms to $\phi$ itself. If the monitoring did not compute a final verdict ($SUCCESS$/$FAILURE$), the state for the next cell is computed with a single application of $FC(\cdot)$ using the reactivation rules (line 12). Note that in this case the state is not expanded, as only the output of the forward chaining is stored ($S' \gets S \cap FC(S,R_E)$ vs $S \gets FC(S,R_R)$). This is used to {\em flush} all the previous truth evaluation, which are to be computed from scratch in the new cell.\\

During the runtime verification, for each cell, the $FC(\cdot)$ function is applied to the initial observations until the transitive closure of all evaluation rules is computed. The number of applications depends linearly on the encoded formula $\phi$: at each iteration the truth values of new subformulae are added, proceeding bottom-up from atoms to $\phi$. For instance, if $\phi = a \vee \Diamond b$, the first iteration would compute the truth values for $a$ and $b$, the second would add to the state the truth evaluation for $\Diamond b$, and finally the third one would compute the truth value of $\phi$ in the current cell. Therefore, for each cell the number of iterations of $FC(\cdot)$ is linear w.r.t. the structure of $\phi$. Each application of $FC(\cdot)$ depends on the number of rules and is again linear w.r.t. the structure of $\phi$, as stated in the previous subsection. This would suggest a quadratic complexity. However, in our implementation, (for each cell of the trace) the system goes through all rules exactly once. This is obtained by the post-order visit of the parsing tree, as shown in Algorithm~\ref{alg:init}, assuring pre-emption for rules evaluating simpler formulae. Therefore, the complexity of the system is inherently linear. This is not in contrast with known exponential lower bounds for the temporal logic validity problem, as RuleRunner deals with the satisfiability of a property on a trace, thus tackling a different problem from the validity one (this distinction is also mentioned in~\cite{trover}). \\

As an example, consider the formula $\phi = a \vee \Diamond b$ and the trace $t = [c - a - b,d - b,END]$ (dashes separate cells and commas separate observations in the same cell). Intuitively, $\phi$ means {\em either $a$ now or $b$ sometimes in the future.} If monitoring $\phi$ over $t$, $a$ fails straight from the beginning, while $b$ is sought until the third cell, when it is observed. Thus the monitoring yields a success even before the end of the trace. \\

In RuleRunner, for first, the formula $\phi$ is parsed into a tree, with $\vee$ as root and $a,b$ as leaves. Then, starting from the leaves, evaluation and reactivation rules for each node are added to the (initially empty) rule system. In our example, (part of) the rule system obtained from $\phi$, namely $RR_{(a \vee \Diamond b)}$, and its behaviour over $t$ are the following:\\

\begin{minipage}[t]{0.5\textwidth}
{\scriptsize
\noindent
EVALUATION RULES
	\begin{itemize}
		\item $R[a],$ $a$ is not observed $\rightarrow$ $[a]{F}$
		\item $R[b]$, $b$ is observed $\rightarrow$ $[b]{T}$
		\item $R[b],$ $b$ is not observed $\rightarrow$ $[b]{F}$
		\item $R[\Diamond b]$, $[b]{T}$  $\rightarrow$ $[\Diamond b]{T}$
		\item $R[\Diamond b]$, $[b]{F}$  $\rightarrow$ $[\Diamond b]{?}$
		\item $R[a\vee\Diamond b]B$, $[a]{F}$, $[\Diamond b]{?}$ $\rightarrow$ $[a\vee\Diamond b]{?R}$
		\item $R[a\vee\Diamond b]R$, $[\Diamond b]{T}$ $\rightarrow$ $[a\vee\Diamond b]{T}$
		\item $R[a\vee\Diamond b]R$, $[\Diamond b]{?}$ $\rightarrow$ $[a\vee\Diamond b]{?R}$
		\item $[a\vee\Diamond b]T$ $\rightarrow$ $SUCCESS$ \\
	\end{itemize}
}

{\scriptsize
\noindent
REACTIVATION RULES	
		\begin{itemize}
			\item $[\Diamond b]? \rightarrow R[b], R[\Diamond b]$
			\item $[a \vee \Diamond b]?R \rightarrow R[a \vee \Diamond b]R$ \\
		\end{itemize}
}

\end{minipage}
\begin{minipage}[t]{0.5\textwidth}

{\scriptsize
\noindent
INITIAL STATE	
\begin{itemize}
\item $R[a], R[b], R[\Diamond b], R[a \vee \Diamond b]B$ \\
\end{itemize} 
}

{\scriptsize
\noindent
EVOLUTION OVER $[c - a - b,d - b,END]$ \\

\begin{tabular}{ |r||l|}
	\hline
state & $R[a], R[b], R[\Diamond b], R[a \vee \Diamond b]B$ \\ \hline
+ obs & $R[a], R[b], R[\Diamond b], R[a \vee \Diamond b]B, c$ \\ \hline
eval & $[a]F, [b]F, [\Diamond b]?, [a \vee \Diamond b]?R$ \\ \hline
react & $R[b], R[\Diamond b], R[a \vee \Diamond b]R$ \\ \hline
\multicolumn{2}{c}{} \vspace*{-3mm}\\
	\hline
state & $R[b], R[\Diamond b], R[a \vee \Diamond b]R$ \\ \hline
+ obs & $R[b], R[\Diamond b], R[a \vee \Diamond b]R,a$ \\ \hline
eval & $[b]F, [\Diamond b]?, [a \vee \Diamond b]?R$ \\ \hline
react & $R[b], R[\Diamond b], R[a \vee \Diamond b]R$ \\ \hline
\multicolumn{2}{c}{} \vspace*{-3mm}\\
	\hline
state & $R[b], R[\Diamond b], R[a \vee \Diamond b]R$ \\ \hline
+ obs & $R[b], R[\Diamond b], R[a \vee \Diamond b]R,b,d$ \\ \hline
eval & $[b]T, [\Diamond b]T, [a \vee \Diamond b]T, SUCCESS$ \\ \hline
STOP & PROPERTY SATISFIED \\ \hline
\end{tabular}
}

\end{minipage}

\vspace*{5mm}
The behaviour of the runtime monitor is the following: 
\begin{itemize}
	\item At the beginning, the system monitors $a$,$b$,$\Diamond b$ and $a \vee \Diamond b$ (initial state = $R[a], R[b], R[\Diamond b],$ $ R[a \vee \Diamond b]B$). The $-B$ in $R[a \vee \Diamond b]B$ means that both disjuncts are being monitored.
	\item In the first cell, $c$ is observed and added to the state $S$. Using the evaluation rules, new truth values are computed: $a$ is false, $b$ is false, $\Diamond b$ is undecided. The global formula is undecided, but since the trace continues the monitoring goes on. The $-R$ in $R[a \vee \Diamond b]R$ means that only the right disjunct is monitored: the system dropped $a$, since it could only be satisfied in the first cell.
	\item In the second cell, $a$ is observed but ignored (the rules for its monitoring are not activated); since $b$ is false again, $\Diamond b$ and $a \vee \Diamond b$ are still undecided.
	\item In the third cell, $d$ is ignored but observing $b$ satisfies, in cascade, $b$, $\Diamond b$ and $a \vee \Diamond b$. The monitoring stops, signalling a success. The rest of the trace is ignored.\\
\end{itemize}


\subsection{Semantics}
RuleRunner implements the FLTL \cite{fltl} semantics; however, there are two main differences in the approach. Firstly, FLTL is based on rewriting judgements, and it has no constraints over the accessed cells, while RuleRunner is forced to complete the evaluation on a cell before accessing the next one. Secondly, FLTL proceeds top-down, decomposing the property and then verifying the observations; RuleRunner propagates truth values bottom up, from observations to the property. In order to show the correspondence between the two formalisms, we introduce the map function:
\begin{center} $map : $ Property $\rightarrow$ FLTL judgement \end{center}
The $map$ function translates the state of a RuleRunner system into a FLTL judgement, analysing the state of the RuleRunner system monitoring $\phi$. Since $\Box$ and $\Diamond$ are derivate operators and they don't belong to FLTL specifications, we omit them from the discussion in this section.\\

\begin{algorithmic}
	\Function{map}{$\phi$, State,index}
\If {$SUCCESS \in $ State} \Return $\top$
\ElsIf {$FAILURE \in $ State} \Return $\perp$
\ElsIf {$[\phi]T \in $ State} \Return $\top$
\ElsIf {$[\phi]F \in $ State} \Return $\perp$
\ElsIf {$[\phi]?S \in $ State} $aux \gets S$
\Else \ find {$R[\phi]S \in $ State}; $aux \gets S$
\EndIf
\If {$\phi = a$}
	\State \Return $[u,index \models a]_{F}$
\ElsIf {$\phi =  \ !a$}
	\State \Return $[u,index \models \neg a]_{F}$
\ElsIf {$\phi = \psi^1 .. \psi^2 \ and \  aux = L$}
\State \Return $map(\psi^1)$
\ElsIf {$\phi = \psi^1 .. \psi^2 \ and \ aux = R$}
\State \Return $map(\psi^2)$	
\ElsIf {$\phi = \psi^1 \vee \psi^2 \ and \ aux = B$}
	\State \Return $map(\psi^1) \sqcup map(\psi^2)$
\ElsIf {$\phi = \psi^1 \wedge \psi^2 \ and \ aux = B$}
	\State \Return $map(\psi^1) \sqcap map(\psi^2)$
\ElsIf {$\phi = \psi^1 U \psi^2 \ and \ aux = A$}
	\State \Return $map(\psi^2) \sqcup (map(\psi^1) \sqcap (map(X(\psi^1 U \psi^2))))$
\ElsIf {$\phi = \psi^1 U \psi^2 \ and \ aux = B$}
	\State \Return $map(\psi^2) \sqcap (map(X(\psi^1 U \psi^2)))$	next
\ElsIf {$\phi = X\psi \ and \ aux \not=M$}
	\State \Return $[u,index \models X\psi]_{F}$
\ElsIf {$\phi = W\psi \ and \ aux \not=M$}
	\State \Return $[u,index \models \bar{X}\psi]_{F}$
\ElsIf {$(\phi = X\psi \ or \ \phi = W\psi) \ and \ aux = M$}
	\State \Return $map(\psi)$	
\EndIf
 \EndFunction
\end{algorithmic}
The following table reports a simple example of an evolution of a RuleRunner step and the corresponding value computed by $map$. Let the property be $a \vee X b$ and the trace be $u = [b-b]$. The index is incremented when the reactivation rules are fired. 
\begin{table}
\begin{center}
\begin{tabular}{ |l|l|}
\hline 
{\bf State} & {\boldmath$map(a \vee X b)$} \\ \hline
{$R[a],R[Xb],R[a \vee X b]B$} & {$[u,0 \models a]_F \sqcup [u,0 \models Xb]_F$} \\ \hline
{$R[a],R[Xb],R[a \vee X b]B,b$} & {$[u,0 \models a]_F \sqcup [u,0 \models Xb]_F$} \\ \hline
{$R[a],R[Xb],R[a \vee X b]B,b,[a]F$} & {$\perp \sqcup [u,0 \models Xb]_F$} \\ \hline
{$R[a],R[Xb],R[a \vee X b]B,b,[a]F, [b]?M$} & {$\perp \sqcup [u,0 \models Xb]_F$} \\ \hline
{$R[a],R[Xb],R[a \vee X b]B,b,[a]F, [b]?M, [a \vee Xb]?R$} & {$[u,0 \models Xb]_F$} \\ \hline
{$R[b],R[Xb]M,R[a \vee X b]R$} & {$[u,1 \models b]_F $} \\ \hline
{$R[b],R[Xb]M,R[a \vee X b]R, b$} & {$[u,1 \models b]_F $} \\ \hline
{$R[b],R[Xb]M,R[a \vee X b]R, b, [b]T$} & {$\top $} \\ \hline
{$R[b],R[Xb]M,R[a \vee X b]R, b, [b]T, [Xb]T$} & {$\top $} \\ \hline
{$R[b],R[Xb]M,R[a \vee X b]R, b, [b]T, [Xb]T, [a \vee Xb]T$} & {$\top $} \\ \hline
{$SUCCESS$} & {$\top $} \\ \hline
\end{tabular}
\vspace*{2mm}
\caption{The $map$ function}
\end{center}
\end{table}

\begin{theorem} For any well-formed FLTL formula $\phi$ over a set of observations, and for every finite trace $u$, for every intermediate state $s_i$ in RuleRunner's evolution over $u$ there exist a valid rewriting $r_j$ of $[u,0 \models \phi]_F$ such that $map(\phi) = r_j$. In other words, RuleRunner's state can always be mapped onto an FLTL judgement over $\phi$.
\end{theorem}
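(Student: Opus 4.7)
The plan is to prove the statement by induction on the length of RuleRunner's computation prefix, with an inner structural induction on $\phi$ used to discharge each step. For the zero-length case (the initial state produced by Algorithm~\ref{alg:init}) I would verify directly, by structural induction on $\phi$, that $map$ returns the syntactic judgement $[u, 0 \models \phi]_F$, which is trivially a valid rewriting of itself. The inductive step then reduces to showing that any single atomic action performed by Algorithm~\ref{alg:monitoringRR} (adding an observation to $S$, firing one evaluation rule via $FC(S,R_E)$, or performing the cell transition $S \gets FC(S,R_R)$) transforms the FLTL term produced by $map$ into another valid FLTL rewriting of the same starting judgement. The terminal cases $SUCCESS\in S$ and $FAILURE\in S$ are immediate, since $map$ returns $\top$ or $\bot$, which are the normal forms of the rewriting relation.

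The easy cases fall out of the tables. Adding a pure observation to $S$ does not alter $map$, since $map$ only inspects atoms of the forms $[\phi]V$, $[\phi]?Z$, and $R[\phi]Z$. Firing an evaluation rule for a propositional operator replaces a subterm $map(\psi^i)$ by its computed value $T$ or $F$; by construction each cell of the evaluation tables of Fig.~\ref{fig:tblz} mirrors an admissible Łukasiewicz three-valued step of $\sqcup$ or $\sqcap$, and the qualifier bookkeeping $B, L, R$ faithfully records how $map$ must subsequently project onto the surviving disjunct/conjunct. For $X\psi$ and $W\psi$, the first evaluation leaves the state in $[X\psi]?$ or $[W\psi]?$ and $map$ returns $[u, i \models X\psi]_F$ or $[u, i \models \bar{X}\psi]_F$; the reactivation rule then switches to the $M$ qualifier while the monitoring index advances, matching the FLTL rewriting $[u, i \models X\psi]_F \to [u, i{+}1 \models \psi]_F$, and the end-of-trace rules $R[X\psi],END \to [X\psi]F$ and $R[W\psi],END \to [W\psi]T$ reproduce the boundary cases of the FLTL definition.

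The hard part will be the until operator, whose reactivation rule must be shown to realise the fixed-point equivalence $\psi^1 U \psi^2 \equiv \psi^2 \sqcup (\psi^1 \sqcap X(\psi^1 U \psi^2))$ at every cell boundary. Here the two qualifiers $A$ and $B$ encode the two FLTL rewriting branches used by the corresponding clauses of $map$, and I would need to verify that the combined effect of $[\phi]?Z \to R[\phi]?Z, S^1, S^2$ together with the $U$-table produces exactly the judgement expected by $map$ in the next cell, for each incoming qualifier in $\{A,B,L,R\}$. A subsidiary difficulty is confluence within a single cell: because Algorithm~\ref{alg:monitoringRR} iterates $FC(S,R_E)$ to a fixed point while new $[\psi]V$ atoms appear bottom-up, I would appeal to the post-order construction of the rule system in Algorithm~\ref{alg:init} to argue that at every intermediate point the still-pending subterms of $map(\phi)$ are exactly the subformulae whose evaluation has not yet been added to $S$. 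The $map$ value therefore evolves monotonically through a chain of valid FLTL rewritings, so the invariant is preserved at every atomic step and, by induction, at every intermediate state $s_i$.
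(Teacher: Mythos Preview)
Your proposal is correct but organises the argument differently from the paper. The paper runs a single structural induction on $\phi$: for each operator it narrates the entire life cycle of the corresponding subsystem (initial state, firing of its evaluation rules, possible reactivation) and checks that $map$ stays on a valid FLTL rewriting throughout, invoking the inductive hypothesis for the subformulae's subsystems wholesale. You instead put temporal induction on the outside (length of the computation prefix) and use structural induction on $\phi$ only locally, to discharge each atomic transition of Algorithm~\ref{alg:monitoringRR}. Your decomposition makes the ``every intermediate state'' clause of the theorem explicit---you isolate the observation-addition step, a single $FC(S,R_E)$ firing, and the reactivation step as separate cases, and you name the within-cell confluence issue that the paper leaves implicit. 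The paper's organisation, conversely, keeps all reasoning about one operator (its table, its qualifiers, its reactivation rule) in one block, which makes the soundness of each operator easier to read off and prevents the until case from being scattered across several transition cases. Both routes ultimately verify the same evaluation-table-versus-FLTL correspondences; yours is an operational invariant proof, the paper's a compositional soundness argument.
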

\begin{proof} The proof proceeds by induction on $\phi$:
	\begin{itemize}
		\item {\boldmath$\phi = a$}\\ If the formula is a simple observation, then the initial state is $R[a]$, and $map(R[a]) = [u,0 \models a]_{F}$. Adding observation to the state does not change the resulting FLTL judgement. If $a$ is observed, RuleRunner will add $[a]T$ to the state, and this will be mapped to $\top$. If $a$ is not observed, RuleRunner will add $[a]F$ to the state, and this will be mapped to $\perp$. So for this simple case, the evolution of RuleRunner's state corresponds either to the rewriting $[u,0 \models a]_{F} = \top$ (if $a$ is observed) or to the rewriting $[u,0 \models a]_{F} = \perp$ (if $a$ is not observed).
		\item {\boldmath$\phi = !a$}\\
		This case is analogous tho the previous one, with opposite verdicts.
		\item {\boldmath $\phi = \psi^1 \vee \psi^2$} \\		
		By inductive hypothesis, a RuleRunner system monitoring $\psi^1$ always corresponds to a rewriting of $[u,i\models \psi^1]$. The same holds for $\psi^2$. Let $\langle R_R^i, R_E^i, S^i \rangle$ be RuleRunner system monitoring the subformula $\psi^1$, with $i \in  \{ 1,2 \}$. A RuleRunner system encoding $\phi$ includes $R^1$ and $R^2$ rules and specific rules for $\psi^1 \vee \psi^2$ given the truth values of $\psi^1$ and $\psi^2$. The initial state is therefore $R[\psi^1 \vee \psi^2] \cup S^1 \cup S^2$, and this is mapped to $map(S^1) \sqcup map(S^2)$. By inductive hypothesis, this is a valid FLTL judgement. In each iteration, as long as the truth value of $\psi^1 \vee \psi^2$ is not computed, the state is mapped on $map(S^1) \sqcup map(S^2)$. When the propagation of truth values reaches $\psi^1 \vee \psi^2$, the assigned truth value mirrors the evaluation table for the disjunction. If either $\psi^1$ or $\psi^2$ is true, then $\phi$ is true, and $map(\phi) = \top$. This corresponds to the valid rewriting $map(S^1) \sqcup map(S^2) = \top$, given that we are considering the case in which there is a true $\psi^i$: $[\psi^i]T$ belongs to the state and $map(\psi^1) = \top$. The false-false case is analogous. In the $?_B$ case, the mapping is preserved, and this is justified by the fact that both $\psi^1$ and $\psi^2$ are undecided in the current cell, therefore $map(\psi^i) \not = \top,\perp$, therefore $map(\psi^1) \sqcup map(\psi^2)$ could not be simplified. In the $?_L$ case, we have that $[\psi^2]F$, therefore $map(\psi^2) = \perp$. The FLTL rewriting is $map(\psi^1) \sqcup map(\psi^2) = map(\psi^1)$, and this is a valid rewriting since $map(\psi^1) \sqcup map(\psi^2) = map(\psi^1) \sqcup \perp = map(\psi^1)$. The $?_R$ case is symmetrical.
	\item {\boldmath $\phi = \psi^1 \wedge \psi^2$} \\	
	Same as above, with the evaluation table for conjunction on the RuleRunner side and the $\sqcap$ operator on the FLTL judgement side.
	\item {\boldmath $\phi = X\psi$} \\
	A RuleRunner system encoding $X\phi$ has initial state $R[X\phi]$, which is mapped on $[u,0 \models X\psi]_F$. Then, if the current cell is the last one, $R[X\phi]$ evaluates to $[X\phi]F$, and the corresponding FLTL judgement is $\perp$. If another cell exists, $R[X\phi]$ evaluates to $[X\phi]?$ (with the same mapping). When the reactivation rules are triggered, $[X\phi]?$ is substituted by $R[X\psi]M,R[\psi]$. Over this state, $map(X\psi) = map (\psi)$, and the index is incremented since reactivation rules were fired. Therefore, the FLTL rewriting is $[u,i \models X\psi] = [u,i+1 \models \psi]$, and this is a valid rewriting.
	\item {\boldmath $\phi = W\psi$} \\
	This case is like the previous, but if the current cell is the last then $R[W\psi]$ evolves to $[W\psi]T$; the mapping is rewritten from $[u,i \models W\psi]$ to $\top$, and this is a valid rewriting if there is no next cell.
	\item {\boldmath $\phi = \psi^1 U \psi^2$} \\
	The initial RuleRunner system includes rules for $\psi^1$, $\psi^2$ and for the $U$ operator. As long as $R[\psi^1 U \psi^2]A$ is not evalued, $map(\psi^1 U \psi^2) = map(\psi^2) \sqcup (map(\psi^1) \sqcap (map(X(\psi^1 U \psi^2))))$, that is, the standard one-step unfolding of the 'until' operator as defined in FLTL. When a truth value for the global property is computed, there are several possibilities. The first one is that $\psi^2$ is true and $\psi^1 U \psi^2$ is immediately satisfied. RuleRunner adds $[\psi^1 U \psi^2]T$ to the state and $map(\phi) = \top$; this corresponds to the rewriting $map(\psi^2) \sqcup (map(\psi^1) \sqcap (map(X(\psi^1 U \psi^2)))) = \top \sqcup (map(\psi^1) \sqcap (map(X(\psi^1 U \psi^2)))) = \top$, which is a valid rewriting. The case for $[\psi^1]F$ and $[\psi^2]F$ is analogous. The $?_A$ case means that the evaluation for the until is undecided in the current trace, and is mapped on the standard one-step unfolding of the until operator in FLTL. The $?_B$ case implicitly encode the information that 'the until cannot be trivially satisfied anymore', and henceforth the FLTL mapping is $map(\psi^1) \sqcap (map(X(\psi^1 U \psi^2)))$. The cases for $?_L$ and $?_R$ have the exact meaning they had in the disjunction and conjunction cases. For instance, if $[\psi^1]F$ and $[\psi^2]?$, RuleRunner adds $[\psi^1 U \psi^2]?R$ to the state, and for the obtained state $map(\phi) = map(\psi^2)$. The sequence of FLTL rewriting is $map(\psi^2) \sqcup (map(\psi^1) \sqcap (map(X(\psi^1 U \psi^2)))) = map(\psi^2) \sqcup (\perp \sqcap (map(X(\psi^1 U \psi^2)))) = map(\psi^2) \sqcup \perp = map(\psi^2)$.
	\end{itemize}
\end{proof}
\begin{corollary}
	RuleRunner yields a FLTL verdict.
\end{corollary}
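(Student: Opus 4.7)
The plan is to derive the corollary as an immediate consequence of the preceding Theorem, by specialising to the terminal states of RuleRunner and exploiting the fact that SUCCESS and FAILURE are, under $map$, exactly the normal forms $\top$ and $\perp$ of FLTL judgements. First I would recall from FLTL semantics that on any finite trace $u$ the judgement $[u,0 \models \phi]_F$ rewrites to a value in $\{\top,\perp\}$, and that the Theorem guarantees that every intermediate state $s_i$ in RuleRunner's evolution corresponds, via $map$, to a valid rewriting of this judgement.

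Second, I would establish that RuleRunner always reaches a terminal state on a finite trace. By construction of Algorithm~\ref{alg:init}, the END marker appended to the last cell fires rules that force each pending temporal operator into a binary value: $X\psi$ becomes $F$ when no next cell exists, $W\psi$ becomes $T$, $\Box\psi$ is resolved to $T$ through the K-qualifier, $\Diamond\psi$ becomes $F$, and the unfolding of $U$ ultimately reduces to these cases. The bottom-up propagation in Algorithm~\ref{alg:monitoringRR} (lines 5--8) then evaluates every subformula up to the root, so the state contains either $[\phi]T$ or $[\phi]F$, which in turn trigger the final rules $[\phi]T \rightarrow \text{SUCCESS}$ and $[\phi]F \rightarrow \text{FAILURE}$ added in Algorithm~\ref{alg:init}.

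Third, I would close the argument by applying the Theorem to this terminal state $s^\star$. When $\text{SUCCESS} \in s^\star$, the first clause of $map$ yields $map(\phi, s^\star, \cdot) = \top$; symmetrically, $\text{FAILURE} \in s^\star$ gives $\perp$. Since by the Theorem $map(\phi, s^\star, \cdot)$ is a valid rewriting of $[u,0 \models \phi]_F$, and $\top$ and $\perp$ are the only normal forms of FLTL rewritings, the RuleRunner verdict must agree with the FLTL verdict of $\phi$ on $u$.

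The main obstacle I expect is the termination step, namely the exhaustive case analysis over the operators of the grammar to verify that the END-triggered rules, together with the post-order evaluation enforced by Algorithm~\ref{alg:init}, resolve every nested subformula to a binary truth value in the final cell. This is mechanical rather than conceptual, since the reactivation and evaluation rules have been specifically designed with this closure in mind, but the argument requires spelling out each operator to be fully rigorous; everything beyond termination reduces transparently to invoking the Theorem.
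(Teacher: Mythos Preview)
Your proposal is correct and follows essentially the same route as the paper: invoke the Theorem to conclude that any terminal state maps to a valid FLTL rewriting, observe that $map$ sends SUCCESS/FAILURE to $\top$/$\perp$, and appeal to the end-of-trace rules to guarantee that a binary verdict is reached on a finite trace. The paper's argument is terser---it simply asserts termination ``thanks to end-of-trace rules'' without the per-operator case analysis you sketch---but the structure is the same.
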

\begin{proof}
	RuleRunner is always in a state that can be mapped on a valid FLTL judgement; therefore, when a binary truth evaluation for the encoded formula is given, this is mapped on the correct binary evaluation in FLTL. But since for such trivial case the $map$ function corresponds to an identity, the RuleRunner evaluation is a valid FLTL judgement. The fact that RuleRunner yields a binary verdict is guaranteed provided that the analysed trace is finite, thanks to end-of-trace rules.
\end{proof}

\section{Experiments}
In order to test the scalability of our system, we tested our prototype against several properties and traces: in this paragraph we report a simple set of experiments and results. These tests involve three FLTL formulae, respectively $\phi^1 = \Diamond a$, $\phi^2 = \Box((a \vee b)\vee(c \vee d))$ and $\phi^3 = \Diamond((a\wedge Xb)\vee(c \wedge Wd))$. We encoded $\phi^{1,2,3}$ in three RuleRunner systems and used them to monitor traces randomly generated using the Latin alphabet as set of observations. Note that each monitoring process can terminate before the end of the trace (e.g. trivially, if $a$ is observed while monitoring $\Diamond a$); we measured the time required to actually monitor a given number of cells. 

\begin{figure}[ht!]
	\centering
	\includegraphics[scale=0.5]{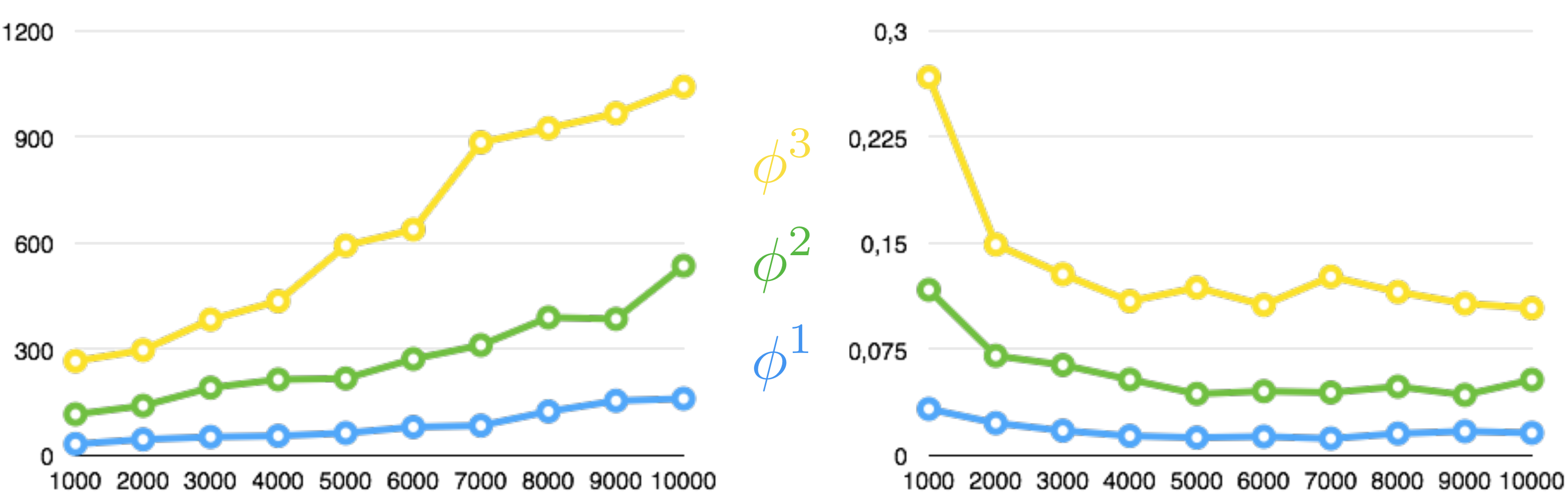}
	\vspace*{-3mm}
	\caption{Absolute (left) and averaged (right) performance of monitors encoding $\phi^{1,2,3}$}
	\label{fig:xp}
\end{figure}

Figure~\ref{fig:xp} shows the time required, for the three rule systems, to monitor an increasing number of cells. In both subfigures, the {\em x-axis} represent the number of monitored cells and the {\em y-axis} a time measurement in milliseconds ($ms$). The three curves, in both subfigures, correspond to the three monitors for $\phi^1 = \Diamond a$, $\phi^2 = \Box((a \vee b)\vee(c \vee d))$ and $\phi^3 = \Diamond((a\wedge Xb)\vee(c \wedge Wd))$. Figure~\ref{fig:xp}(A) reports total times and Figure~\ref{fig:xp}(B) reports average monitoring time per cell. The trends show how the monitoring time scales w.r.t. the number of the traces; the decreasing of average times in the curves of Figure~\ref{fig:xp}(B) is due to the overhead to compute the rule system becoming less relevant when averaging with a larger number of cells.\\
These experiment can be replicated with the tool available at {\href{http://www.di.unito.it/~perotti/RuleRunner.jnlp}{www.di.unito.it/$\sim$perotti/RuleRunner.jnlp}}.


\section{Conclusions and Future Work}
In this paper we present RuleRunner, a rule-based runtime verification system that exploits Horn clauses in implication form and forward chaining to perform a monitoring task. RuleRunner is a module in a wider framework that includes the encoding of the rule system in a neural network, the exploitation of GPUs to improve monitoring performances (as computation in neural networks boils down to matrix-based operations) and the adoption of machine learning algorithms to adapt the encoded property to the observed trace. Our final goal is the development of a system for scalable and parallel monitoring and capable to provide a description of patterns that falsified the prescribed temporal property. The applications of this frameworks spans from multi-agent systems (where a system designer may want to use an agent's unscripted solution to a problem as a benchmark for all other agents~\cite{nmas}) to security (where a security manager may want to correct some false positives when monitoring security properties~\cite{falsepos}).\\
Concerning RuleRunner, a future direction of work is to create rule systems for other finite-trace semantics. For instance, we conjecture that removing all rules with $END$ would be a valid starting point for the development of a rule system for LTL3; the rule systems for FLTL and LTL3 will then be used to build a rule system for RVLTL. A second direction of future work will be to modify RuleRunner in such a way to use external forward chaining tools for the monitoring (as we use our own inference engine), such as the Constraint Handling Rules extension included in several Prolog implementations.\\

\section{Bibliography}
\bibliographystyle{eptcs}
\bibliography{hvcs14}
\end{document}